\documentclass[12pt]{article}

\usepackage[a4paper,margin=1in]{geometry}
\usepackage{amsmath,amssymb,amsfonts,amsthm, bm}
\usepackage{mathtools}
\usepackage{mathrsfs}
\usepackage{graphicx}
\usepackage{tikz}
\usetikzlibrary{arrows.meta,calc,decorations.markings}
\usetikzlibrary{positioning}
\usepackage{standalone}
\usepackage{hyperref}
\hypersetup{
    colorlinks=true,
    linkcolor=blue,
    citecolor=blue,
    urlcolor=blue
}
\usepackage{enumitem}
\usepackage{cite}

\theoremstyle{plain}
\newtheorem{theorem}{Theorem}[section]
\newtheorem{proposition}[theorem]{Proposition}
\newtheorem{lemma}[theorem]{Lemma}
\newtheorem{corollary}[theorem]{Corollary}

\theoremstyle{definition}
\newtheorem{definition}[theorem]{Definition}

\newtheorem{remark}[theorem]{Remark}

\newcommand{\R}{\mathbb{R}}

\usepackage{fullpage}
\title{Collision geometry of relativistic spinning particles}

\author{Simone Calogero}

\begin{document}

\maketitle

\begin{abstract}
We investigate elastic binary collisions of relativistic spinning particles in special relativity. The spin of each particle is represented by an antisymmetric second order tensor. Assuming the conservation of total four momentum and total spin tensor, together with the mass shell and spin constraints, we formulate the collision problem in a fully Lorentz covariant setting.
We show that the relativistic collision problem admits a simple geometric formulation, reducing the original system of conservation laws to the solution of a quadratic equation on a circle. This reduction yields a complete classification of the postcollisional states together with explicit reconstruction formulas for all postcollisional variables from the conserved quantities. In particular, there are generically only finitely many postcollisional states, with the maximal number equal to eight. 
\end{abstract}


\section{Introduction}\label{intro}
Binary collisions constitute the fundamental microscopic mechanism governing the dynamics of dilute gases and play a central role in kinetic theory. In both the classical and relativistic settings, the collision operator is entirely determined by the relation between the precollisional and postcollisional variables of two interacting particles. Consequently, understanding the geometry of binary collisions is a prerequisite for the construction and analysis of kinetic models~\cite{Cerci}.

For identical relativistic particles without spin, the collision problem is completely determined by the conservation of total four-momentum together with the mass shell constraints. For a fixed set of precollisional momenta, the corresponding postcollisional momenta form a two-dimensional set parametrized by a unit vector
representing the scattering direction in the center of momentum frame~\cite{Groot, Glassey,Strain}. This parametrization underlies the relativistic Boltzmann collision operator and provides the geometric description of elastic relativistic collisions in the absence of internal degrees of freedom.

The inclusion of spin substantially modifies this picture. Besides the conservation of total four-momentum, one must also impose the conservation of total spin,  together with the mass shell and spin constraints. Throughout this paper we restrict our attention to elastic binary collisions of identical relativistic particles. Accordingly, both particles are assumed to have the same rest mass $m>0$ and the same fixed spin magnitude $\sigma>0$. Following Frenkel~\cite{Frenkel}, the spin of each particle is described by an antisymmetric second order tensor satisfying a system of orthogonality and normalization conditions. 

The additional conservation law of the total spin tensor makes the determination of the postcollisional states considerably more involved than in the spinless case. Indeed, while the postcollisional momenta alone are still parametrized by the scattering direction, the postcollisional spins must simultaneously satisfy the conservation of the total spin tensor together with a system of nonlinear algebraic constraints involving both the postcollisional momenta and the postcollisional spins.

The purpose of the present paper is to provide a complete geometric analysis of this collision problem. We first formulate the collision equations in a fully Lorentz covariant manner. We then show that, for identical particles, the conservation of the total spin tensor may be equivalently expressed in terms of a conserved bivector constructed from the particle momenta and particle spins. Exploiting Lorentz invariance, the problem is reduced to the center of momentum frame, where the postcollisional variables are shown to be completely determined by a scattering direction and a single scalar parameter describing the longitudinal component of the postcollisional spin difference.

Our main result is that the collision problem is equivalent to the solution of a single quadratic equation on a circle. This reduction leads to a complete classification of the postcollisional states together with explicit reconstruction formulas for the postcollisional momenta and spins. In particular, whereas the spinless relativistic collision problem admits a two dimensional continuum of postcollisional states, the spinning collision problem generically possesses only a finite even number of postcollisional states, up to a maximum of eight, while an exceptional class of collisions admits infinitely many postcollisional states. 

The collision problem considered in this paper was first studied by John Lighton Synge in~\cite{Synge}, where he derived some special solutions and observed that, generically, there can be at most eight postcollisional configurations. In the present paper we provide a complete solution to Synge's collision problem for relativistic spinning particles.

The paper is organized as follows. In Section~\ref{formulation} we formulate the collision problem in a Lorentz covariant framework and reduce it to the center of momentum frame. Section~\ref{geometry} contains the geometric analysis of the collision manifold, including its complete classification. In Section~\ref{reconstructionSEC} we give explicit reconstruction formulas for the admissible collisional states. Finally, Section~\ref{kinetic} discusses the implications of the collision geometry for kinetic theory and introduces the notion of a collision kernel.

\subsection*{Notation}
Throughout the paper, we work on Minkowski spacetime, i.e., the manifold $\mathbb{R}^{4}$ endowed with the Minkowski metric $\eta$. In the standard Cartesian coordinates,
\[
\eta_{\mu\nu}=\operatorname{diag}(-1,1,1,1).
\]
Greek indices $\mu,\nu,\alpha,\beta,\ldots$ take values in $\{0,1,2,3\}$ and are raised and lowered using the metric $\eta$. Repeated indices are summed over according to Einstein's summation convention. 
For any two four-vectors $a$ and $b$, we denote by $a\wedge b$ the antisymmetric tensor with components
\[
(a\wedge b)^{\mu\nu}
=
a^\mu b^\nu-a^\nu b^\mu.
\]
Moreover, $\varepsilon^{\mu\nu\alpha\beta}$ denotes the totally antisymmetric Levi-Civita tensor normalized by
\[
\varepsilon^{0123}=1.
\]
Finally, the Hodge dual of an antisymmetric tensor $A$ is defined by 
\[ 
(*A)^{\mu\nu} = \frac12 \varepsilon^{\mu\nu\alpha\beta} A_{\alpha\beta}, 
\] 
so that $*^2=-\operatorname{Id}$ on bivectors.

\section{Formulation of the collision problem}\label{formulation}
Consider a relativistic spinning particle with rest mass $m>0$ and four-momentum $p=(p^0,{\bm p})\in\R^4$ satisfying
\begin{equation}\label{massshell}
p^\mu p_\mu =-m^2,\quad p^0>0.
\end{equation}
Following Frenkel~\cite{Frenkel}, we describe the spin by an antisymmetric tensor $\phi\in \Lambda^2\R^4$ such that 
\begin{equation}\label{frenkelcond}
\phi^{\mu\nu}p_\mu=0,\quad
\phi^{\mu\nu}\phi_{\mu\nu}=\sigma^2,
\end{equation}
where $\sigma>0$ is a (dimensionless) constant representing the spin magnitude. Defining the spin vector $s\in\R^4$ by
\begin{equation}\label{sfromphi}
s^\mu=\frac{\sqrt{2}}{m}(*\phi)^{\mu\nu}p_\nu,
\end{equation}
we find 
\begin{equation}\label{phifroms}
\phi = \frac{1}{\sqrt{2}m}(*p\wedge s),
\end{equation} 
as well as
\begin{equation}\label{thomascond}
s^\mu p_\mu =0,\quad s^\mu s_\mu =\sigma^2.
\end{equation}
Conversely, if a vector $s\in\R^4$ satisfies \eqref{thomascond}, then the two-form \eqref{phifroms} satisfies Frenkel's conditions \eqref{frenkelcond}.
It follows that the manifolds
\begin{align*}
&\Gamma:=\{(p,\phi)\in\R^4\times\Lambda^2\R^4\,:\,\text{\eqref{massshell} and \eqref{frenkelcond} hold}\}\\
&\Omega:=\{(p,s)\in\R^4\times\R^4\,:\,\text{\eqref{massshell} and \eqref{thomascond} hold}\}
\end{align*}
are isomorphic. The manifolds $\Gamma$ and $\Omega$ represent the state space of a single relativistic spinning particle in the spin tensor and spin vector representations, respectively.
 
Consider now a second particle with rest mass $m$, spin magnitude $\sigma$, four-momentum $q$, spin tensor $\psi$ and spin vector $r$ such that $(q,\psi)\in\Gamma$---or equivalently $(q,r)\in\Omega$. The total four-momentum and total spin tensor of the two particle system are defined by $p+q$ and $\phi+\psi$, respectively.
Suppose that the particles collide at some future time. Then we can refer to
\[
\langle p,\phi;q,\psi\rangle:=((p,\phi),(q,\psi))\in\Gamma^2,
\] 
or equivalently to
\[
\langle p,s;q,r\rangle:=((p,s),(q,r))\in\Omega^2,
\] 
as an admissible precollisional state. Let us denote by $p',q',\phi',\psi'$, or equivalently by $p',q',s',r'$, the postcollisional state variables.
\begin{definition}
The collision of two particles in the admissible precollisional state $\langle p,\phi;q,\psi\rangle\in\Gamma^2$ is said to be elastic if the postcollisional state is admissible, i.e.,  
\[
\langle p',\phi';q',\psi'\rangle\in\Gamma^2,
\] 
and
\begin{align*}
&p'+q'=p+q\quad \text{(conservation of total four-momentum),}\\
&\phi'+\psi'=\phi+\psi\quad  \text{(conservation of total spin tensor).}
\end{align*}
\end{definition}

Since the particles have identical masses, the normalization factor in~\eqref{phifroms} is the same for both particles. Hence, the conservation of the total spin tensor is equivalent to  
\[
p'\wedge s'+q'\wedge r'=p\wedge s+q\wedge r.
\]
This equation expresses the conservation of the bivector
\[
K:=p\wedge s+q\wedge r.
\]
It is therefore natural to introduce
\begin{equation}\label{P,K}
P:=p+q,\quad
K:=p\wedge s+q\wedge r,
\end{equation}
and formulate the relativistic collision problem as follows.

{\bf Covariant collision problem.} 
Given an admissible precollisional state $\langle p,s;q,r\rangle\in\Omega^2$, 
determine all admissible postcollisional states $\langle p',s';q',r'\rangle\in\Omega^2$
satisfying
\[
p'+q'=P,
\qquad
p'\wedge s'+q'\wedge r'=K
\]
where $P,K$ are given by~\eqref{P,K}.

As we shall see, the entire analysis of the collision problem reduces to the geometric study of the conserved pair $(P,K)$.

\subsection{Threshold collision problem}
We continue this section by solving the collision problem in a special, threshold case.
\begin{definition}
The elastic collision of two particles in an admissible precollisional state is said to be at threshold if
\[
p=q.
\]
\end{definition}

\begin{lemma}\label{simple}
The threshold condition $p=q$ is equivalent to
\[
P^\mu P_\mu=-4m^2.
\]
\end{lemma}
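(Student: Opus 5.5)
Expand $P^\mu P_\mu=(p+q)^\mu(p+q)_\mu=p^\mu p_\mu+q^\mu q_\mu+2p^\mu q_\mu=-2m^2+2p^\mu q_\mu$, using the mass shell condition \eqref{massshell} for both momenta. The claim is therefore equivalent to showing that $p^\mu q_\mu=-m^2$ holds if and only if $p=q$, for two future-directed timelike vectors on the mass shell. One direction is immediate: if $p=q$ then $p^\mu q_\mu=p^\mu p_\mu=-m^2$, so $P^\mu P_\mu=-4m^2$.

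For the converse, the main point is the reverse Cauchy--Schwarz inequality for future timelike vectors. To make it concrete, use Lorentz invariance of the scalar product and pass to the rest frame of $p$, in which $p=(m,\mathbf 0)$. Writing $q=(q^0,\bm q)$ with $q^0=\sqrt{m^2+|\bm q|^2}>0$, we get
\[
p^\mu q_\mu=-m q^0=-m\sqrt{m^2+|\bm q|^2}\le -m^2 .
\]
Equality holds exactly when $\bm q=0$, that is, when $q=(m,\mathbf 0)=p$ in this frame, and hence in every frame. So $P^\mu P_\mu=-4m^2$ forces $p^\mu q_\mu=-m^2$, which gives $q=p$.

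Nothing here is really an obstacle. The only care needed is to check that the boost to the rest frame of $p$ exists, which holds because $p$ is timelike with $p^0>0$, and that it preserves both $\eta$ and the condition $q^0>0$. The same computation also shows that $P^\mu P_\mu\le -4m^2$ in general.
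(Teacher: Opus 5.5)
Your proof is correct and follows the paper's route: you expand $P^\mu P_\mu=-2m^2+2p^\mu q_\mu$ and reduce the claim to $p^\mu q_\mu=-m^2\iff p=q$. The paper rewrites this as $(p-q)^\mu(p-q)_\mu=0$ and simply asserts that this forces $p=q$ for future-directed vectors of equal mass, whereas your rest-frame computation $p^\mu q_\mu=-m\sqrt{m^2+|\bm q|^2}\le -m^2$ actually proves that step, and as a bonus gives $P^\mu P_\mu\le-4m^2$, i.e.\ $M\ge 2m$, which the paper uses later without comment.
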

\begin{proof}
Since
\[
p^\mu p_\mu=q^\mu q_\mu=-m^2,
\]
we have
\[
P^\mu P_\mu=(p+q)^\mu(p+q)_\mu
=-2m^2+2p^\mu q_\mu.
\]
Thus, the condition $P^\mu P_\mu=-4m^2$ is equivalent to  $p^\mu q_\mu=-m^2$
Hence,
\[
(p-q)^\mu(p-q)_\mu
=
p^\mu p_\mu+q^\mu q_\mu-2p^\mu q_\mu
=0.
\]
Since $p$ and $q$ are future directed timelike vectors of the same mass, this is equivalent to the threshold condition $p=q$.
\end{proof}
\begin{proposition}\label{thresholdcase}
Let $\langle p,\phi;q,\psi\rangle\in\Gamma^2$ be an admissible precollisional state. 
Assume that the collision is at threshold.
Then the pre- and postcollisional four-momenta satisfy
\[
p'=q'=p=q.
\] 
Let $\Phi:=\phi+\psi$. 
Then the set of admissible postcollisional spin tensors is given by
\[
\phi'=\frac{\Phi}{2}+\chi,
\qquad
\psi'=\frac{\Phi}{2}-\chi,
\]
where $\chi\in\Lambda^2\mathbb R^4$ satisfies
\[
\chi^{\mu\nu}p_\mu=0,\quad 
\Phi^{\mu\nu}\chi_{\mu\nu}=0,
\quad
\chi^{\mu\nu}\chi_{\mu\nu}=\sigma^2-\frac14\Phi^{\mu\nu}\Phi_{\mu\nu}.
\]
Equivalently, in the spin four-vector representation, if $S:=s+r$,
then all admissible postcollisional spin four-vectors are given by
\[
s'=\frac{S}{2}+\xi,
\qquad
r'=\frac{S}{2}-\xi,
\]
where $\xi\in\mathbb R^4$ satisfies
\[
\xi^\mu p_\mu=0,\quad
\xi^\mu S_\mu=0,\quad 
\xi^\mu\xi_\mu=\sigma^2-\frac14 S^\mu S_\mu.
\]
The set of admissible postcollisional states reduces to a single element if and only if $\phi=\psi$,
or equivalently, $s=r$. In this case
\begin{equation}\label{physical}
\phi'=\phi,\quad \psi'=\psi,\quad\text{or, equivalently,}\quad
s'=s,\quad r'=r
\end{equation}
Otherwise, the admissible postcollisional states form a one-parameter family.
\end{proposition}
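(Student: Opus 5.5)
Since $p=q$, we have $P=2p$ and $P^\mu P_\mu=-4m^2$. Conservation of four-momentum gives $p'+q'=P$, hence $P^\mu P_\mu$ is also the invariant of the postcollisional pair. Applying Lemma~\ref{simple} to the postcollisional state, which is admissible, yields $p'=q'$. Then $2p'=P=2p$ forces $p'=q'=p=q$.

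With all momenta equal to $p$, the spin constraints for the postcollisional tensors are $\phi'^{\mu\nu}p_\mu=\psi'^{\mu\nu}p_\mu=0$ and $\phi'^{\mu\nu}\phi'_{\mu\nu}=\psi'^{\mu\nu}\psi'_{\mu\nu}=\sigma^2$, together with $\phi'+\psi'=\Phi$. Define $\chi:=\tfrac12(\phi'-\psi')$, so that $\phi'=\Phi/2+\chi$ and $\psi'=\Phi/2-\chi$. Since $\Phi^{\mu\nu}p_\mu=0$, the orthogonality conditions are equivalent to $\chi^{\mu\nu}p_\mu=0$. Expanding the two norms gives
\[
\tfrac14\Phi\cdot\Phi\pm\Phi\cdot\chi+\chi\cdot\chi=\sigma^2,
\]
where $A\cdot B:=A^{\mu\nu}B_{\mu\nu}$. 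The difference of these two equations gives $\Phi\cdot\chi=0$, and their sum gives $\chi\cdot\chi=\sigma^2-\tfrac14\Phi\cdot\Phi$. Conversely, any $\chi$ satisfying these three conditions produces an admissible pair. The vector version follows in the same way with $\xi=\tfrac12(s'-r')$ and $S=s+r$. The two versions are equivalent: \eqref{phifroms} is linear in $s$ for fixed $p$, so $\chi=\frac{1}{\sqrt2 m}(*p\wedge\xi)$. Alternatively, one can repeat the same computation with the vector constraints \eqref{thomascond}.

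For the counting, it is easiest to work in the rest frame of $p$, so $p=(m,\mathbf 0)$. There $\xi$ and $S$ are purely spatial, and the conditions become $\boldsymbol\xi\perp\mathbf S$ and $|\boldsymbol\xi|^2=\sigma^2-\tfrac14|\mathbf S|^2$. The precollisional choice $\xi_0=(s-r)/2$ satisfies these, since the whole analysis applies to the precollisional pair as well. Moreover $|\mathbf s|=|\mathbf r|=\sigma$ gives $|\mathbf S|\le 2\sigma$, so the right-hand side is $\ge0$.

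This right-hand side vanishes exactly when $|\mathbf s+\mathbf r|=2\sigma$, i.e.\ when $\mathbf s=\mathbf r$ (equality in the triangle inequality with equal norms). In that case the only solution is $\xi=0$, which gives \eqref{physical}. Otherwise the right-hand side is positive and $\mathbf S\in\R^3$. If $\mathbf S\neq0$, the solution set is a circle of positive radius in the plane $\mathbf S^\perp$. If $\mathbf S=0$ (i.e.\ $r=-s$), the conditions only fix $|\boldsymbol\xi|=\sigma$, and the solution set is a whole 2-sphere of radius $\sigma$.

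The main obstacle is this last point: the claimed ``one-parameter family'' seems accurate only when $S\neq0$. When $s=-r$ one should expect a two-parameter family. I would handle it by checking this case separately and, if it is confirmed, stating it as a caveat or reading ``one-parameter'' loosely as ``a continuum''. The uniqueness characterization ($\phi=\psi$, equivalently $s=r$) is unaffected, since $\phi=\psi\iff s=r$ by \eqref{sfromphi} and \eqref{phifroms} with common $p$.
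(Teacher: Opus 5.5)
Your proof follows the paper's own argument step for step: Lemma~\ref{simple} applied to the postcollisional pair, the split $\Phi/2\pm\chi$ (resp.\ $S/2\pm\xi$), subtracting and adding the two normalization constraints, and then counting the solutions $\xi$. Your rest-frame argument also makes explicit why $\sigma^2-\frac14 S^\mu S_\mu\ge 0$, with equality iff $s=r$, which the paper only asserts.

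Your caveat is correct, and it is a flaw in the paper's statement and proof, not in your argument. Take the threshold state $r=-s$. It is admissible, since $(-s)^\mu p_\mu=0$ and $(-s)^\mu(-s)_\mu=\sigma^2$. It gives $S=0$ and $\Phi=0$, so the condition $\xi^\mu S_\mu=0$ is vacuous. The paper's claim that $\xi^\mu p_\mu=0$ and $\xi^\mu S_\mu=0$ cut out a two-dimensional plane therefore fails in this case. The postcollisional states $s'=-r'=\xi$ then fill the sphere $\{\xi:\xi^\mu p_\mu=0,\ \xi^\mu\xi_\mu=\sigma^2\}\cong S^2$, which is a two-parameter family. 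So the last sentence of the proposition holds only under the extra assumption $s\neq -r$ (equivalently $\Phi\neq0$), and you should state it that way rather than reading ``one-parameter'' loosely.
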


\begin{proof}
As $P^\mu P_\mu$ is a collision invariant, the identities $p'=q'=p=q$ follow by Lemma~\ref{simple}. 
It remains to describe the spin tensors. Conservation of the total spin tensor gives
\[
\phi'+\psi'=\Phi.
\]
Therefore there exists an antisymmetric tensor $\chi$ such that
\[
\phi'=\frac{\Phi}{2}+\chi,
\qquad
\psi'=\frac{\Phi}{2}-\chi.
\]
Since $p'=q'=p$, the orthogonality conditions for $\phi'$ and $\psi'$ are equivalent to $\chi^{\mu\nu}p_\mu=0$. 
The spin normalization constraints give
\[
\left(\frac{\Phi}{2}+\chi\right)^{\mu\nu}
\left(\frac{\Phi}{2}+\chi\right)_{\mu\nu}
=
\sigma^2
\]
and
\[
\left(\frac{\Phi}{2}-\chi\right)^{\mu\nu}
\left(\frac{\Phi}{2}-\chi\right)_{\mu\nu}
=
\sigma^2.
\]
Subtracting the two equations yields
\[
\Phi^{\mu\nu}\chi_{\mu\nu}=0.
\]
Adding them yields
\[
\chi^{\mu\nu}\chi_{\mu\nu}
=
\sigma^2-\frac14\Phi^{\mu\nu}\Phi_{\mu\nu}.
\]
This proves the tensor representation.
The spin four-vector representation follows in the same way. At threshold $p=q$, and conservation of the bivector $K$ is equivalent to
\[
p\wedge(s'+r')=p\wedge(s+r).
\]
Since all spin four-vectors are orthogonal to $p$, this implies $s'+r'=S$.
Thus,
\[
s'=\frac{S}{2}+\xi,
\qquad
r'=\frac{S}{2}-\xi.
\]
The constraints $s'^\mu p_\mu=r'^\mu p_\mu=0$ give $\xi^\mu p_\mu=0$,
while the equal spin magnitudes give
\[
\xi^\mu S_\mu=0,\quad
\xi^\mu\xi_\mu
=
\sigma^2-\frac14 S^\mu S_\mu:=\rho
\]
The equations $\xi^\mu p_\mu=0$ and $\xi^\mu S_\mu=0$ define a two dimensional spacelike plane, and $\xi^\mu \xi_\mu=\rho$ restricts $\xi$ to a circle. The admissible postcollisional states form a one-parameter family whenever $\rho>0$. The only exception occurs when the circle degenerates to a single point, that is, when $\rho=0$. This is equivalent to $\rho=0$, i.e., $S^\mu S_\mu =4\sigma^2$. Since $s^\mu s_\mu=r^\mu r_\mu=\sigma^2$, this occurs if and only if $s=r$, equivalently $\phi=\psi$. 
\end{proof}
\begin{remark}
The threshold case corresponds to two particles having identical four-momenta. It should therefore be regarded as a degenerate limiting case of the collision problem rather than a genuine scattering event. In this situation, the physically natural solution is~\eqref{physical}, since it reflects the indistinguishability of two particles occupying the same state.
\end{remark}

\subsection{Reduction to the center of momentum frame}
Our next goal is to formulate the non-threshold collision problem in the center of momentum frame. 
Let $\Lambda\in O(1,3)$ be a proper orthochronous Lorentz transformation. Given an admissible state
\[
\langle p,s;q,r\rangle\in\Omega^2,
\]
we define
\[
\Lambda\langle p,s;q,r\rangle:=\langle\Lambda p,\Lambda s;\Lambda q,\Lambda r\rangle.
\]

\begin{proposition}
The set $\Omega^2$ is invariant under proper orthochronous Lorentz transformations. That is, 
\[
\langle p,s;q,r\rangle\in\Omega^2\Rightarrow \Lambda\langle p,s;q,r\rangle \in\Omega^2
\]
Moreover,
\[
P\mapsto\Lambda P,\qquad K\mapsto\Lambda K\Lambda^T,
\]
and the collision problem is Lorentz invariant.
\end{proposition}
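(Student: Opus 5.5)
The proof will be a direct verification resting on two facts. A Lorentz transformation preserves the bilinear form $\eta$, that is, $\Lambda^T\eta\Lambda=\eta$, or in components $\eta_{\mu\nu}\Lambda^\mu{}_\alpha\Lambda^\nu{}_\beta=\eta_{\alpha\beta}$. A proper orthochronous transformation also preserves the time orientation of timelike vectors.

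\emph{Invariance of $\Omega^2$.} Given $\langle p,s;q,r\rangle\in\Omega^2$, the scalar constraints in \eqref{massshell} and \eqref{thomascond} are built from inner products, so they are preserved:
\[
(\Lambda p)^\mu(\Lambda p)_\mu=p^\mu p_\mu=-m^2,\qquad (\Lambda s)^\mu(\Lambda p)_\mu=s^\mu p_\mu=0,\qquad (\Lambda s)^\mu(\Lambda s)_\mu=\sigma^2,
\]
and likewise for $(q,r)$. The remaining condition $(\Lambda p)^0>0$ is where orthochronicity is used. $\Lambda$ maps the future timelike cone to itself, so a future directed timelike $p$ is sent to a future directed timelike $\Lambda p$. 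This is the only step that needs more than bilinearity, and I would justify it by the standard fact that $\Lambda^0{}_0\ge 1$ together with the reverse Cauchy--Schwarz inequality for timelike vectors.

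\emph{Transformation of $P$ and $K$.} $P\mapsto\Lambda P$ follows from linearity: $\Lambda p+\Lambda q=\Lambda(p+q)$. For $K$, compute componentwise:
\[
(\Lambda a\wedge\Lambda b)^{\mu\nu}=\Lambda^\mu{}_\alpha\Lambda^\nu{}_\beta(a^\alpha b^\beta-a^\beta b^\alpha)=(\Lambda(a\wedge b)\Lambda^T)^{\mu\nu}.
\]
Summing over the two particles gives $K\mapsto\Lambda K\Lambda^T$. I would also note in passing that the spin tensor transforms the same way. Since $\det\Lambda=1$, the Levi-Civita tensor is invariant, so $*$ commutes with the action, and by \eqref{phifroms} this gives $\phi\mapsto\Lambda\phi\Lambda^T$. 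This observation is not needed for the statement in the $\Omega$ representation.

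\emph{Lorentz invariance of the collision problem.} Suppose $\langle p',s';q',r'\rangle$ solves the covariant collision problem with data $(P,K)$. Applying $\Lambda$ gives an element of $\Omega^2$ by the first step. By the second step,
\[
\Lambda p'+\Lambda q'=\Lambda P,\qquad \Lambda p'\wedge\Lambda s'+\Lambda q'\wedge\Lambda r'=\Lambda K\Lambda^T,
\]
so the transformed state solves the problem with data $(\Lambda P,\Lambda K\Lambda^T)$. Applying the same argument to $\Lambda^{-1}$, which is also proper orthochronous, shows that $\Lambda$ gives a bijection between the two solution sets. Hence the collision problem is Lorentz invariant.
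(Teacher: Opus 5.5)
Your proof is correct and follows the same route as the paper: the constraints are preserved because $\Lambda$ preserves Minkowski inner products, $P$ transforms by linearity, and $K$ transforms by the identity $\Lambda a\wedge\Lambda b=\Lambda(a\wedge b)\Lambda^T$. Your version is actually more complete than the paper's, which neither checks $(\Lambda p)^0>0$ (the only step where orthochronicity is needed) nor spells out the bijection between solution sets via $\Lambda^{-1}$.
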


\begin{proof}
Since Lorentz transformations preserve the Minkowski metric, 
\[
(\Lambda p)^\mu(\Lambda p)_\mu=p^\mu p_\mu,
\]
and so the mass shell constraint is preserved.
Likewise,
\[
(\Lambda s)^\mu(\Lambda p)_\mu=s^\mu p_\mu,\quad
(\Lambda s)^\mu(\Lambda s)_\mu=s^\mu s_\mu.
\]
Hence, $\Omega$, and therefore also $\Omega^2$, is Lorentz invariant.
Furthermore, $P'=P$ is equivalent to $\Lambda P'=\Lambda P$, 
while $K'=K$ is equivalent to $\Lambda K'\Lambda^T=\Lambda K\Lambda^T$.
Therefore, the collision equations are Lorentz invariant.
\end{proof}

Since $P^\mu P_\mu<0$,
there exists a proper orthochronous Lorentz transformation sending $P$ to $(M,0,0,0)$, where
\[
M=\sqrt{-P^\mu P_\mu}
\]
is the invariant mass of the two-particle system.
The corresponding frame is called the center of momentum (COM) frame~\cite{Groot}.  In the remainder of the paper we work in this frame. It is defined up to spatial rotations---a property which will we exploit in Section~\ref{reconstructionSEC}. Thus,
\[
P=(M,0,0,0)\quad\text{in the COM frame.}
\]
The conservation of momentum implies
\[
p=(M/2,\bm p),\qquad q=(M/2,\bm q),\quad \bm q=-\bm p
\]
with
\begin{equation}\label{kappa}
|\bm p|=|\bm q|=\frac{1}{2}\sqrt{M^2-4m^2}:=\kappa.
\end{equation}
As the threshold case has already been solved in Proposition~\ref{thresholdcase}, we may assume 
\[
\kappa>0.
\]
Without loss of generality we may write
\begin{equation}\label{p}
\bm p=\kappa\,\bm \omega ,\quad \bm q=-\kappa\,\bm\omega \quad \text{where}\quad \bm \omega \in S^2.
\end{equation}
The spin vectors $s=(s^0,{\bm s}), r=(r^0,{\bm r})$ in the COM frame satisfy
\begin{equation}\label{s0}
s^0=\frac{2}{M}\,\bm p\cdot\bm s=\frac{2\kappa}{M}\bm s\cdot \bm \omega,
\qquad
r^0=\frac{2}{M}\,\bm q\cdot\bm r=-\frac{2\kappa}{M}\bm r\cdot \bm \omega,
\end{equation}
so that the independent variables are the spatial vectors $\bm s, \bm r\in\R^3$ and the scattering direction $\bm \omega\in S^2$.

Let $\bm E$, $\bm B$ denote, respectively, the electric and magnetic parts of the bivector $K$; that is,
\[
E^i=K^{0i},\quad B^l=\frac{1}{2}\varepsilon^{ijl}K_{ij}.
\]
The conservation of the bivector $K$ is equivalent to the conservation of the vectors $\bm E$, $\bm B$. 
One computes
\begin{equation}\label{EBpre}
\bm E=\frac{M}{2}{\bm U}-\frac{2\kappa^2}{M}(\bm \omega \cdot \bm U)\,\bm \omega ,\quad \bm B=\kappa\, \bm \omega \times \bm V,
\end{equation}
where we introduced the new independent spin variables
\begin{equation}\label{UV}
\bm U=\bm s+\bm r,\quad \bm V=\bm s-\bm r.
\end{equation}
In terms of the variables ${\bm U}, {\bm V}, {\bm \omega}$, the normalization constraints $s^\mu s_\mu=r^\mu r_\mu =\sigma^2$ are equivalent to
\begin{subequations}\label{spinconstraint}
\begin{align}
&-\frac{4\kappa^2}{M^2}({\bm \omega}\cdot{\bm U})({\bm \omega}\cdot{\bm V})+{\bm U}\cdot{\bm V}=0,\\
&-\frac{2\kappa^2}{M^2}[({\bm \omega}\cdot{\bm U})^2+({\bm \omega}\cdot{\bm V})^2]+\frac{1}{2}(|{\bm U}|^2+|{\bm V}|^2)=2\sigma^2.
\end{align}
\end{subequations}
The first equation in~\eqref{EBpre} implies
\begin{equation}\label{Uprime}
{\bm U}=\frac2M\left({\bm E}-({\bm E}\cdot \bm \omega  )\bm \omega  \right)+\frac{M}{2m^2}({\bm E}\cdot \bm \omega  )\bm \omega  .
\end{equation}
The second equation in~\eqref{EBpre} implies ${\bm B}\cdot \bm \omega  =0$, and yields
\begin{equation}\label{Vprime}
{\bm V}= \frac1\kappa {\bm B}\times \bm \omega  +\alpha \bm \omega  ,
\end{equation}
where
\begin{equation}\label{alpha}
\alpha=\bm \omega  \cdot {\bm V}
\end{equation}
is an arbitrary scalar.
Substituting the found expressions of ${\bm U}$, ${\bm V}$ into the spin normalization constraints~\eqref{spinconstraint} one obtains the equations
\begin{subequations}\label{spinconstraint2}
\begin{align}
&({\bm E}\cdot \bm \omega  )\alpha=\frac1\kappa \bm \omega  \cdot({\bm B}\times {\bm E}),\label{first}\\
&\frac{4|{\bm E}|^2}{M^2}+\frac{4\kappa^2}{M^2m^2}({\bm E}\cdot \bm \omega  )^2+\frac{|{\bm B}|^2}{\kappa^2}+\frac{4m^2}{M^2}\alpha^2=4\sigma^2\label{second},
\end{align}
\end{subequations}
where we recall that $\kappa=\kappa(M)$ is given by~\eqref{kappa}. 
\begin{definition}
Given
\[
M>2m,\qquad{\bm E},{\bm B}\in\mathbb R^3,
\]
the collision manifold is
\[
\mathcal{X}_\mathrm{COM}(M,{\bm E}, {\bm B})=\{(\bm\omega,\alpha)\in S^2\times\R\, :\, {\bm B}\cdot\bm\omega=0\,\text{and~\eqref{spinconstraint2} hold}\}.
\]
\end{definition}

\begin{remark}
For any fixed triple of the conserved quantities $M, {\bm E}, {\bm B}$, each element of $\mathcal{X}_\mathrm{COM}(M, {\bm E}, {\bm B})$ generates an admissible precollisional state. As the conditions in the definition of $\mathcal{X}_\mathrm{COM}(M,{\bm E}, {\bm B})$ have been derived in the COM frame under the mass shell and spin constraints, they also hold for the postcollisional variables. It follows that each element $\mathcal{X}_\mathrm{COM}(M,{\bm E}, {\bm B})$ generates also an admissible postcollisional state.
If $\mathcal{X}_\mathrm{COM}(M,{\bm E}, {\bm B})$ is empty, it means that there are no admissible collision states with the given conserved quantities $(M,{\bm E}, {\bm B})$.
\end{remark}

\begin{remark}
The equation ${\bm B}\cdot{\bm\omega}=0$ in the definition of $\mathcal{X}_\mathrm{COM}(M,{\bm E}, {\bm B})$ 
has a simple geometric meaning. In the spinless relativistic collision problem, the scattering direction is arbitrary on the sphere $S^2$. In the present spinning case, the conservation of the bivector $K$ restricts the scattering direction to the circle $S^2\cap{\bm B}^{\perp}$.
Thus, the additional conservation law reduces the two dimensional family of spinless postcollisional momenta to a one dimensional geometric constraint.
\end{remark}

The non-threshold collision problem in the COM frame has been reduced to the study of the geometry of the manifold $\mathcal{X}_\mathrm{COM}(M,{\bm E}, {\bm B})$. In particular, we show in the next section that the collision manifold is, generically, a finite set. 

\section{Geometry of the collision manifold}\label{geometry}
Our next purpose  is to eliminate the auxiliary variable $\alpha$ in the definition of the collision manifold $\mathcal{X}_\mathrm{COM}(M,{\bm E}, {\bm B})$. 
We first record the exceptional case in which the conserved vectors ${\bm E}$ and ${\bm B}$ are parallel. This case is responsible for the occurrence of infinitely many admissible collision states.

\begin{theorem}\label{exceptionalcase}
Assume that
\[
{\bm E}\times{\bm B}=0.
\]
Define
\begin{equation}\label{CD}
C:=M^2\sigma^2-|{\bm E}|^2-|{\bm B}|^2,
\qquad
D:=\frac{m^2|{\bm B}|^2}{\kappa^2}.
\end{equation}
$\mathrm{(I)}$ If ${\bm B}\neq0$, then the collision manifold satisfies the following alternatives.
\begin{enumerate}
\item If $C<D$, then
\[
\mathcal X_{\mathrm{COM}}(M,{\bm E},{\bm B})=\emptyset.
\]
\item If $C=D$, then
\[
\mathcal X_{\mathrm{COM}}(M,{\bm E},{\bm B})
=
\left\{
({\bm\omega},0)
:
{\bm\omega}\in S^2,\ 
{\bm B}\cdot{\bm\omega}=0
\right\}.
\]
\item If $C>D$, then
\[
\mathcal X_{\mathrm{COM}}(M,{\bm E},{\bm B})
=
\left\{
({\bm\omega},\alpha)
:
{\bm\omega}\in S^2,\ 
{\bm B}\cdot{\bm\omega}=0,\ 
\alpha=\pm\frac1m\sqrt{C-D}
\right\}.
\]
\end{enumerate}
In the special case ${\bm E}={\bm B}=0$ there holds
\[
\mathcal X_{\mathrm{COM}}(M,0,0)=\left\{
({\bm\omega},\alpha)
:
{\bm\omega}\in S^2,\ \ 
\alpha=\pm\frac{M\sigma}{m}
\right\}.
\]
$\mathrm{(II)}$ If $({\bm B}=0,{\bm E}\neq0)$,
then 
\[
C<0\Rightarrow \mathcal X_{\mathrm{COM}}(M,{\bm E},0)=\emptyset,
\]
while for $C\geq 0$ the collision manifold is given by the union
\[
\mathcal X_{\mathrm{COM}}(M,{\bm E},0)=X_1\cup X_2,
\]
where
\[
X_1=
\left\{
({\bm\omega},0):
{\bm\omega}\in S^2,\ 
({\bm E}\cdot{\bm\omega})^2
=\frac{m^2}{\kappa^2}C
\right\},
\]
and
\[
X_2=
\left\{
({\bm\omega},\alpha):
{\bm\omega}\in S^2\cap{\bm E}^{\perp},\ 
\alpha=\pm\frac1m\sqrt{C}
\right\}.
\]
In all cases,
\[
\mathcal X_{\mathrm{COM}}(M,{\bm E},{\bm B})\neq\emptyset
\iff
C\ge D.
\]
Moreover, whenever the collision manifold is nonempty, it contains infinitely many elements.
\end{theorem}

\begin{proof}
$\mathrm{(I)}$ Assume first that ${\bm E}$ is not zero. Then ${\bm E}\times{\bm B}=0$ implies that the vectors ${\bm E}$ and ${\bm B}$ are parallel. Hence, whenever ${\bm B}\cdot{\bm\omega}=0$, we also have ${\bm E}\cdot{\bm\omega}=0$.
Therefore~\eqref{first} is identically satisfied on $S^2\cap{\bm B}^{\perp}$, while~\eqref{second} becomes
\[
\frac{4|{\bm E}|^2}{M^2}+\frac{|{\bm B}|^2}{\kappa^2}+\frac{4m^2}{M^2}\alpha^2
=4\sigma^2.
\]
Multiplying by $M^2/4$ and using
\[
\frac{M^2}{4\kappa^2}=1+\frac{m^2}{\kappa^2},
\]
we obtain
\[
m^2\alpha^2 =
M^2\sigma^2-|{\bm E}|^2-|{\bm B}|^2-\frac{m^2|{\bm B}|^2}{\kappa^2}= C-D.
\]
Therefore
\[
\alpha^2=\frac{C-D}{m^2}.
\]
If instead ${\bm E}$ is zero, then again~\eqref{first} is identically satisfied on $S^2\cap{\bm B}^{\perp}$. Hence, the same conclusion still holds, with the simplification
\[
C-D=M^2\sigma^2-\frac{M^2}{4\kappa^2}|\bm B|^2.
\]
Finally, if ${\bm E}={\bm B}=0$, then $C=M^2\sigma^2$ and $D=0$,
so that
$\alpha=\pm M\sigma/m$,
while ${\bm B}\cdot{\bm\omega}=0$ is vacuous. This yields the stated formula for
$\mathcal X_{\mathrm{COM}}(M,0,0)$.
The result $\mathrm{(I)}$ is proved. 

$\mathrm{(II)}$ Assume ${\bm B}$ is zero, but ${\bm E}$ is not. 
Then~\eqref{first} becomes
\[
({\bm E}\cdot{\bm\omega})\alpha=0.
\]
Hence, either
$
\alpha=0,
$
or
$
{\bm E}\cdot{\bm\omega}=0.
$
Moreover,~\eqref{second} becomes
\[
|{\bm E}|^2
+\frac{\kappa^2}{m^2}({\bm E}\cdot{\bm\omega})^2
+m^2\alpha^2
=M^2\sigma^2,
\]
or equivalently,
\[
\frac{\kappa^2}{m^2}({\bm E}\cdot{\bm\omega})^2
+m^2\alpha^2
=C.
\]
For $C<0$ this equation has no solutions and thus $\mathcal X_{\mathrm{COM}}(M,{\bm E},0)$ is empty. If $C\geq 0$ and $\alpha=0$, then
\[
({\bm E}\cdot{\bm\omega})^2
=\frac{m^2}{\kappa^2}C,
\]
which yields the set $X_1$.
If $C\geq 0$ and ${\bm E}\cdot{\bm\omega}=0$, then
\[
m^2\alpha^2=C,
\]
which yields the set $X_2$.
\end{proof}

\begin{remark}
When ${\bm E}={\bm B}=0$, the conserved bivector $K$ vanishes in the COM frame. This includes the spinless case. In this situation, Theorem~\ref{exceptionalcase} recovers the usual spinless feature that the scattering direction is arbitrary on $S^2$.
\end{remark}
We now turn to the complementary generic case
\[
{\bm E}\times{\bm B}\neq0.
\]
Then ${\bm B}\neq0$, and the condition ${\bm B}\cdot{\bm \omega}=0$ restricts ${\bm \omega}$ to the circle $
S^2\cap{\bm B}^{\perp}$.
Let
\[
{\bm E}_{\perp}:={\bm E}-\frac{{\bm E}\cdot{\bm B}}{|{\bm B}|^2}{\bm B}
\]
be the orthogonal projection of ${\bm E}$ onto the plane ${\bm B}^{\perp}$. Since ${\bm E}\times{\bm B}\neq0$, we have ${\bm E}_{\perp}\neq0$. Define
\begin{equation}\label{basis}
{\bm e}:=\frac{{\bm E}_{\perp}}{|{\bm E}_{\perp}|},\qquad{\bm f}:=\frac{{\bm B}}{|{\bm B}|}\times{\bm e}.
\end{equation}
Then ${\bm e},{\bm f}$ form an orthonormal basis of ${\bm B}^{\perp}$. Hence, every ${\bm \omega}\in S^2\cap{\bm B}^{\perp}$ can be written as
\[
{\bm \omega}=x{\bm e}+y{\bm f},\qquad x^2+y^2=1.
\]
With this parametrization,
\[
{\bm E}\cdot{\bm \omega}=|{\bm E}_{\perp}|\,x.
\]
Moreover, since
\[
{\bm B}\times{\bm E}={\bm B}\times{\bm E}_{\perp}=|{\bm B}|\,|{\bm E}_{\perp}|\,{\bm f},
\]
we have
\[
{\bm \omega}\cdot({\bm B}\times{\bm E})=|{\bm B}|\,|{\bm E}_{\perp}|\,y.
\]
Therefore~\eqref{first} becomes
\[
|{\bm E}_{\perp}|\,x\,\alpha=\frac1\kappa|{\bm B}|\,|{\bm E}_{\perp}|\,y.
\]
Since $|{\bm E}_{\perp}|\neq0$, this is equivalent to
\[
x\alpha=\frac{|{\bm B}|}{\kappa}y.
\]
In particular, $x=0$ is impossible, because then $y=\pm1$ and the right hand side is nonzero. Thus, $x\neq0$, and
\[
\alpha=\frac{|{\bm B}|}{\kappa}\frac{y}{x}.
\]
Substituting this expression in~\eqref{second} we obtain
\[
\frac{4|{\bm E}|^2}{M^2}+\frac{4\kappa^2 |{\bm E}_{\perp}|^2}{M^2m^2}x^2+\frac{|{\bm B}|^2}{\kappa^2}+
\frac{4m^2|{\bm B}|^2}{M^2\kappa^2}\frac{y^2}{x^2}=4\sigma^2.
\]
Multiplying by \(M^2/4\), and using
\[
|{\bm E}_{\perp}|^2=\frac{|{\bm E}\times{\bm B}|^2}{|{\bm B}|^2},
\]
we obtain
\begin{equation}\label{tempo}
A x^2+D\frac{y^2}{x^2}+D=C,
\end{equation}
where $C,D$ are given by~\eqref{CD} and
\begin{equation}\label{A}
A:=\frac{\kappa^2}{m^2}\frac{|{\bm E}\times{\bm B}|^2}{|{\bm B}|^2}.
\end{equation}
Since $x^2+y^2=1$, setting
\[
t:=x^2
\]
gives
\[
0<t\leq1,\qquad y^2=1-t,
\]
and we may rewrite~\eqref{tempo} as the following quadratic equation on $t$:
\begin{equation}\label{Qequation}
Q(t):=A t^2-Ct+D=0.
\end{equation}
Let $\mathcal{Z}$ be the set of solutions to~\eqref{Qequation}.
We have proved the following result.

\begin{theorem}\label{quadraticreduction}
When
\[
{\bm E}\times{\bm B}\neq0
\]
the elements of the collision manifold $\mathcal{X}_\mathrm{COM}(M,{\bm E}, {\bm B})$ are parametrized by the admissible roots of $Q(t)$ as follows. 
\begin{itemize}
\item[(i)] For all $t\in(0,1)\cap\mathcal{Z}$ there exists four elements of  $\mathcal{X}_\mathrm{COM}(M,{\bm E}, {\bm B})$, namely $(\bm\omega,\alpha)=({\bm \omega}_{(\varepsilon,\delta)},\alpha_{(\varepsilon,\delta)})$, 
\[
{\bm \omega}_{(\varepsilon,\delta)}=\varepsilon\sqrt{t}\,{\bm e}+\delta\sqrt{1-t}\,{\bm f},\quad 
\alpha_{(\varepsilon,\delta)}=\frac{|{\bm B}|}{\kappa}\frac{\delta}{\varepsilon}\sqrt{\frac{1-t}{t}},\qquad\varepsilon,\delta\in\{-1,1\},
\]
where $\{{\bm e},{\bm f}\}$ is the orthonormal basis~\eqref{basis} of ${\bm B}^{\perp}$. 
\item[(ii)]
For all $t\in\mathcal{Z}\cap\{1\}$ there exists two elements of  $\mathcal{X}_\mathrm{COM}(M,{\bm E}, {\bm B})$, namely $(\bm\omega,\alpha)=({\bm \omega}_{(\varepsilon)},\alpha_{(\varepsilon)})$
\[
{\bm \omega}_{(\varepsilon)}=\varepsilon{\bm e},\qquad\alpha_{(\varepsilon)}=0,\qquad\varepsilon\in\{-1,1\}.
\]
\item If $\mathcal{Z}\cap (0,1]$ is empty, then so is $\mathcal{X}_\mathrm{COM}(M,{\bm E}, {\bm B})$.
\end{itemize}
\end{theorem}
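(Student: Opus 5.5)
The derivation preceding the statement already carries most of the argument, so the plan is to organize it as an equivalence. We show that $(\bm\omega,\alpha)\in\mathcal X_{\mathrm{COM}}(M,\bm E,\bm B)$ if and only if $\bm\omega=x\bm e+y\bm f$ with $x^2+y^2=1$, $x\neq0$, $\alpha=\frac{|\bm B|}{\kappa}\frac{y}{x}$, and $t=x^2\in(0,1]$ satisfies $Q(t)=0$. The three items then follow by counting the sign choices of $x$ and $y$.

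\emph{Forward direction.} Since $\bm E\times\bm B\neq0$ forces $\bm B\neq0$, the condition $\bm B\cdot\bm\omega=0$ places $\bm\omega$ on the circle $S^2\cap\bm B^\perp$. This circle is spanned by the orthonormal basis~\eqref{basis}, so $\bm\omega=x\bm e+y\bm f$ with $x^2+y^2=1$. We then use the identities $\bm E\cdot\bm\omega=|\bm E_\perp|x$ and $\bm\omega\cdot(\bm B\times\bm E)=|\bm B||\bm E_\perp|y$. With these, \eqref{first} becomes $x\alpha=\frac{|\bm B|}{\kappa}y$, which excludes $x=0$ and determines $\alpha$ uniquely. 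Substituting this $\alpha$ into \eqref{second} and multiplying by $M^2/4$ gives \eqref{tempo}. The constant term must be checked: $\frac{M^2}{4}\bigl(\frac{4|\bm E|^2}{M^2}+\frac{|\bm B|^2}{\kappa^2}\bigr)=|\bm E|^2+|\bm B|^2+\frac{m^2|\bm B|^2}{\kappa^2}$ by $\frac{M^2}{4\kappa^2}=1+\frac{m^2}{\kappa^2}$, and this produces $C$ and the extra $D$ in~\eqref{tempo}. Next we multiply \eqref{tempo} by $t=x^2>0$ and use $y^2=1-t$. This gives $At^2+D(1-t)+Dt=Ct$, i.e. $Q(t)=0$.

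\emph{Reverse direction.} Each step above is reversible, because $t>0$ allows division by $t$. Given $t\in\mathcal Z\cap(0,1]$, choose $x=\varepsilon\sqrt t$, $y=\delta\sqrt{1-t}$, and $\alpha=\frac{|\bm B|}{\kappa}\frac{y}{x}$. Then $\bm\omega\in S^2$, $\bm B\cdot\bm\omega=0$, \eqref{first} holds by the choice of $\alpha$, and \eqref{second} holds by reversing the algebra. The last item of the theorem is simply the contrapositive: no element can exist unless its $t=x^2$ lies in $\mathcal Z\cap(0,1]$.

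\emph{Counting.} For $t\in(0,1)$ both $\sqrt t$ and $\sqrt{1-t}$ are nonzero. The four sign pairs $(\varepsilon,\delta)$ therefore give four distinct $\bm\omega$, since $\bm e,\bm f$ are independent, and $\alpha_{(\varepsilon,\delta)}$ is as stated using $y/x=\frac{\delta}{\varepsilon}\sqrt{(1-t)/t}$. For $t=1$ we have $y=0$, hence $\alpha=0$, and only $\varepsilon$ matters, giving two elements $\pm\bm e$.

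The step I expect to need the most care is tracking the constants when turning \eqref{second} into \eqref{tempo}, specifically producing $D$ via the identity relating $M$, $\kappa$, $m$. A second point of care is making the equivalence genuinely two-sided, since the nonvanishing $x\neq0$ is what justifies multiplying by $t$.
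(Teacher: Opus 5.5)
Your proposal is correct and follows the paper's own argument. It parametrizes $S^2\cap\bm B^\perp$ by $\bm\omega=x\bm e+y\bm f$, reduces \eqref{first} to $x\alpha=\frac{|\bm B|}{\kappa}y$ (which forces $x\neq0$), substitutes into \eqref{second} using $\frac{M^2}{4\kappa^2}=1+\frac{m^2}{\kappa^2}$, and sets $t=x^2$ to obtain $Q(t)=0$. Beyond the paper, you only make explicit the converse direction and the sign counting, which the paper leaves implicit when it says the preceding derivation proves the theorem.
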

Figure~\ref{fig:vectors} illustrates the admissible scattering directions introduced in Theorem~\ref{quadraticreduction}. The directions $\bm\omega_{(\epsilon)}$ are obtained from $\bm\omega_{(\epsilon,\delta)}$ in the limiting case $t=1$.

\begin{figure}[ht!]
\centering
\begin{tikzpicture}[
  >=Latex,
  axis/.style={->, line width=0.45pt},
  circleline/.style={line width=0.55pt},
  vec/.style={->, line width=0.9pt},
  point/.style={circle, fill=black, inner sep=1.65pt},
  lab/.style={font=\small},
  title/.style={font=\normalsize\bfseries},
  note/.style={font=\small}
]

\def\R{1.55}
\def\ax{2.05}
\def\troot{0.38}
\pgfmathsetmacro{\x}{sqrt(\troot)}
\pgfmathsetmacro{\y}{sqrt(1-\troot)}

\newcommand{\basecircle}[2]{%
  \begin{scope}[shift={(#1,#2)}]
    \draw[circleline] (0,0) circle (\R);
    \draw[axis] (-\ax,0) -- (\ax,0) node[right=-1pt] {$\bm f$};
    \draw[axis] (0,-\ax) -- (0,\ax) node[above left=1pt and -1pt] {$\bm e$};
  \end{scope}
}

\node[title] at (-3.45,2.72) {$(a)$ $0<t<1$};
\basecircle{-3.45}{0}
\begin{scope}[shift={(-3.45,0)}]
  \draw[vec] (0,0) -- ({\y*\R},{\x*\R});
  \draw[vec] (0,0) -- ({-\y*\R},{\x*\R});
  \draw[vec] (0,0) -- ({\y*\R},{-\x*\R});
  \draw[vec] (0,0) -- ({-\y*\R},{-\x*\R});
  \node[point] at ({\y*\R},{\x*\R}) {};
  \node[point] at ({-\y*\R},{\x*\R}) {};
  \node[point] at ({\y*\R},{-\x*\R}) {};
  \node[point] at ({-\y*\R},{-\x*\R}) {};
  \node[lab, anchor=west] at ({\y*\R+0.12},{\x*\R+0.08}) {$\omega_{(1,1)}$};
  \node[lab, anchor=east] at ({-\y*\R-0.12},{\x*\R+0.08}) {$\omega_{(1,-1)}$};
  \node[lab, anchor=west] at ({\y*\R+0.12},{-\x*\R-0.12}) {$\omega_{(-1,1)}$};
  \node[lab, anchor=east] at ({-\y*\R-0.12},{-\x*\R-0.12}) {$\omega_{(-1,-1)}$};
\end{scope}
\node[note, text width=6.8cm, align=center] at (-3.45,-2.55)
  {$\displaystyle \omega_{(\varepsilon,\delta)}=\varepsilon\sqrt t\,\bm e+\delta\sqrt{1-t}\,\bm f.$};

\node[title] at (3.45,2.72) {$(b)$ $t=1$};
\basecircle{3.45}{0}
\begin{scope}[shift={(3.45,0)}]
  \draw[vec] (0,0) -- (0,\R);
  \draw[vec] (0,0) -- (0,-\R);
  \node[point] at (0,\R) {};
  \node[point] at (0,-\R) {};
  \node[lab, anchor=west] at (0.15,\R+0.04) {$\omega_{(1)}=\bm e$};
  \node[lab, anchor=west] at (0.15,-\R-0.16) {$\omega_{(-1)}=-\bm e$};
\end{scope}
\node[note, text width=6.8cm, align=center] at (3.45,-2.55)
  {The endpoint root gives only the two antipodal directions $\pm\bm e$.};

\end{tikzpicture}
\caption{Admissible scattering directions associated with one admissible root $t\in (0,1]$ of the quadratic polynomial $Q(t)$.}
\label{fig:vectors}
\end{figure}

\begin{remark}
The variable $t=({\bm\omega}\cdot{\bm e})^2$
has a simple geometric interpretation: it is the squared projection of the scattering direction onto the distinguished direction ${\bm e}={\bm E}_{\perp}/|{\bm E}_{\perp}|$
on the collision circle $S^2\cap{\bm B}^{\perp}$. Hence, the quadratic equation $Q(t)=0$ determines the admissible positions of ${\bm\omega}$ on this circle.
\end{remark}
By Theorem~\ref{quadraticreduction}, the cardinality of the collision manifold is determined entirely by the location of the roots of $Q$ relative to the interval $(0,1]$.
To express neatly the number of elements of the collision manifold,
let
\[
N_{\mathrm{int}}=\#\{t\in(0,1):Q(t)=0\},
\]
and let
\[
N_{\mathrm{end}}=
\begin{cases}
1, & Q(1)=0,\\
0, & Q(1)\neq0.
\end{cases}
\]
Then 
\[
{\bm E}\times{\bm B}\neq0 \Rightarrow \# \mathcal{X}_\mathrm{COM}(M,{\bm E}, {\bm B})= 4N_{\mathrm{int}}+2N_{\mathrm{end}}.
\]
Since $Q$ is quadratic, one has $N_{\mathrm{int}}\in\{0,1,2\}$.
Therefore, the possible finite numbers of admissible collision states are
\[
0,\ 2,\ 4,\ 6,\ 8.
\]
Figure~\ref{fig:cardinality} illustrates the possible geometric configurations.

\begin{figure}[ht!]
\centering
\begin{tikzpicture}[
  >=Latex,
  axis/.style={->, line width=0.45pt},
  circleline/.style={line width=0.55pt},
  vec/.style={->, line width=0.85pt},
  point/.style={circle, fill=black, inner sep=1.55pt},
  title/.style={font=\normalsize},
]
\def\R{2}
\def\ax{2.5}
\def\sepX{6}
\def\sepY{6.20}

\newcommand{\panel}[4]{%
  \begin{scope}[shift={(#1,#2)}]
    \node[title] at (0,-2.8) {#3};
    \draw[circleline] (0,0) circle (\R);
    \draw[axis] (-\ax,0) -- (\ax,0) node[right=-1pt] {$\bm f$};
    \draw[axis] (0,-\ax) -- (0,\ax);
    \node at (0.22,2.2) {$\bm e$};
    \foreach \ang in {#4}{
      \draw[vec] (0,0) -- ({\R*cos(\ang)},{\R*sin(\ang)});
      \node[point] at ({\R*cos(\ang)},{\R*sin(\ang)}) {};
    }
  \end{scope}
}

\panel{-\sepX/2}{\sepY/2}{$(a)$ $2$ solutions}{90,270}
\panel{\sepX/2}{\sepY/2}{$(b)$ $4$ solutions}{55,125,235,305}
\panel{-\sepX/2}{-\sepY/2}{$(c)$ $6$ solutions}{90,55,125,235,305,270}
\panel{\sepX/2}{-\sepY/2}{$(d)$ $8$ solutions}{35,65,115,145,215,245,295,325}

\draw[line width=0.35pt] (0,-5.25) -- (0,5.25);
\draw[line width=0.35pt] (-4.7,0) -- (4.7,0);

\end{tikzpicture}
\caption{Typical configurations of the collision manifold
$\mathcal X_{\mathrm{COM}}(M,\bm E,\bm B)$ in the generic case
$\bm E\times\bm B\neq0$. The black arrows represent the admissible
scattering directions $\bm\omega\in S^2\cap\bm B^\perp$. In the of 8 solutions case, there are two roots of $Q(t)$ in the interval $(0,1)$, each giving rise to a set of 4 vectors $\bm\omega_{(\varepsilon,\delta)}$. By Corollary~\ref{nonemptycondition}, the generic
cardinalities of admissible configuration sets are $4$ and $8$.}
\label{fig:cardinality}
\end{figure}

\begin{remark}
The multiplicities in the classification have a direct geometric origin. Each root $t\in(0,1)$ fixes the absolute value of the projection of ${\bm\omega}$ onto ${\bm e}$, but leaves two choices for the sign of this projection and two choices for the sign of the orthogonal component. Thus, each interior root gives four points on the collision circle. In contrast, the endpoint root $t=1$ gives only the two antipodal points $\pm{\bm e}$. 
\end{remark}

In the next theorem we characterize the cardinality of the collision manifold in terms of inequalities on the conserved quantities ${\bm E}$, ${\bm B}$.

\begin{theorem}\label{cardinalitytheorem}
Assume that 
\[
{\bm E}\times{\bm B}\neq0
\]
 and recall the definitions of $A,C,D$ in~\eqref{CD},~\eqref{A}. 
The cardinality of the collision manifold is as follows.

If $D<A$, then
\[
\#\mathcal X_{\mathrm{COM}}(M,{\bm E},{\bm B})
=
\begin{cases}
0, & C<2\sqrt{AD},\\
4, & C=2\sqrt{AD},\ \text{or}\ C>A+D\\
6, & C=A+D,\\
8, & 2\sqrt{AD}<C<A+D.
\end{cases}
\]

If $D\geq A$, then
\[
\#\mathcal X_{\mathrm{COM}}(M,{\bm E},{\bm B})
=
\begin{cases}
0, & C<A+D,\\
2, & C=A+D,\\
4, & C>A+D.
\end{cases}
\]
\end{theorem}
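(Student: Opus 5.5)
The plan is to apply Theorem~\ref{quadraticreduction} together with the counting formula $\#\mathcal X_{\mathrm{COM}}=4N_{\mathrm{int}}+2N_{\mathrm{end}}$. Everything then reduces to locating the real roots of $Q(t)=At^2-Ct+D$ relative to the interval $(0,1]$.

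First I record the signs of the coefficients. Since ${\bm E}\times{\bm B}\neq0$, we have ${\bm B}\neq0$, so $A>0$ and $D>0$. Three quantities then do the work:
\[
Q(0)=D>0,\qquad Q(1)=A+D-C,\qquad t_1t_2=\frac{D}{A},\qquad t_1+t_2=\frac{C}{A}.
\]
Here $Q(1)$ fixes $N_{\mathrm{end}}$. The product and sum of the roots control their position, and the discriminant $C^2-4AD$ controls whether real roots exist. Because $Q(0)>0$, a real root can never be $\le 0$ unless both roots are negative, which happens exactly when $C<0$. Hence positive real roots exist if and only if $C\ge 2\sqrt{AD}$, and in particular there are no admissible roots when $C<2\sqrt{AD}$. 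I also note that $2\sqrt{AD}\le A+D$ by AM–GM, with equality only when $A=D$.

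\emph{Case $D<A$.}
\begin{itemize}
\item If $C<2\sqrt{AD}$, there are no admissible roots, so the count is $0$.
\item If $C=2\sqrt{AD}$, there is a double root $t=\sqrt{D/A}\in(0,1)$. This gives $N_{\mathrm{int}}=1$, $N_{\mathrm{end}}=0$, so the count is $4$.
\item If $2\sqrt{AD}<C<A+D$, there are two distinct positive roots and $Q(1)>0$. The vertex satisfies $C/(2A)<(A+D)/(2A)<1$, so both roots lie on the same side of $1$, namely in $(0,1)$. The count is $8$.
\item If $C=A+D$, the roots are $1$ and $D/A\in(0,1)$, so the count is $4+2=6$.
\item If $C>A+D$, then $Q(1)<0$ while $Q(0)>0$. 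Exactly one root lies in $(0,1)$ and the other exceeds $1$, so the count is $4$.
\end{itemize}

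\emph{Case $D\ge A$.}
\begin{itemize}
\item If $C<A+D$, then $Q(1)>0$, so any real roots lie on the same side of $1$ and are positive (if any exist). They cannot both be in $(0,1)$, since then $t_1t_2<1\le D/A$. So $N_{\mathrm{int}}=N_{\mathrm{end}}=0$ and the count is $0$. This subcase also absorbs $C<2\sqrt{AD}$.
\item If $C=A+D$, the roots are $1$ and $D/A\ge1$. Even when $D=A$ gives a double root at $1$, we get $N_{\mathrm{int}}=0$, $N_{\mathrm{end}}=1$, so the count is $2$.
\item If $C>A+D$, then $Q(1)<0<Q(0)$ gives exactly one interior root, so the count is $4$.
\end{itemize}

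The main obstacle is bookkeeping at the boundary cases. The points needing care are the double root at $C=2\sqrt{AD}$ and the coincidence $D=A$ at $C=A+D$. One must also check that the ordering $2\sqrt{AD}<A+D$ makes the case list in the statement exhaustive and disjoint. The vertex argument $C/(2A)<1$ is the one genuinely non-trivial inequality, and it is where the hypothesis $D<A$ is used.
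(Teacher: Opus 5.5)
Your proof is correct. It differs from the paper's only in how the roots of $Q$ are located in $(0,1]$; both arguments rest on Theorem~\ref{quadraticreduction} and the count $4N_{\mathrm{int}}+2N_{\mathrm{end}}$.

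The paper divides $Q(t)=0$ by $t$ and intersects the horizontal line of height $C$ with the graph of $f(t)=At+D/t$ on $(0,1]$. The minimum value $f(t_*)=2\sqrt{AD}$ at $t_*=\sqrt{D/A}$ and the endpoint value $f(1)=A+D$ give both thresholds at once, and monotonicity of $f$ on each side of $t_*$ gives the counts. You stay with the quadratic and use $Q(0)=D>0$, the sign of $Q(1)=A+D-C$, Vieta's relations, the discriminant and the vertex $C/(2A)$.

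The paper's picture is more economical. It shows at a glance why the cardinality can change only on $C=2\sqrt{AD}$ and $C=A+D$. Your algebraic version is longer but handles the boundary cases more transparently. The factorization $Q(t)=(At-D)(t-1)$ at $C=A+D$ and the bound $t_1t_2=D/A\ge1$ treat $D=A$ cleanly. The paper instead asserts $t_*>1$ for $D\ge A$, which should read $t_*\ge1$; this is harmless, since $f$ is still decreasing on $(0,1]$.

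One minor imprecision arises in your case $D\ge A$, $C<A+D$. There the real roots need not be positive: for $C\le-2\sqrt{AD}$ both are negative. What you actually need is that $t_1t_2=D/A>0$ forces the roots to share a sign. Negative roots are then inadmissible, and the conclusion $N_{\mathrm{int}}=N_{\mathrm{end}}=0$ stands.
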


\begin{proof}
For $t>0$~\eqref{Qequation}
is equivalent to
\[
C=At+\frac{D}{t}.
\]
Define
\[
f(t):=At+\frac{D}{t},
\qquad 0<t\le1.
\]
The admissible values of $t$ are precisely the intersections of the graph of $f$ with the horizontal line of height $C$.
Since
\[
f'(t)=A-\frac{D}{t^2},
\]
the unique critical point of $f$ on $(0,\infty)$ is $t_*=\sqrt{D/A}$. 
If $D<A$, then $
t_*\in(0,1)$, $f(t_*)=2\sqrt{AD}$, $f(1)=A+D$
and $f$ has a minimum at $t=t_*$. Hence there are no admissible roots if $C<2\sqrt{AD}$, one interior double root if $C=2\sqrt{AD}$, two interior roots if $2\sqrt{AD}<C<A+D$, one interior root together with the endpoint root $t=1$ if $C=A+D$, and one interior root if $C>A+D$. Since each interior root gives four points of $\mathcal X_{\mathrm{COM}}$, while the endpoint root gives two points, the claimed cardinalities follow.
If $D\geq A$, then $t_*>1$, so $f$ is decreasing on $(0,1]$, with $
f(1)=A+D$.
Thus there are no admissible roots if $C<A+D$, the only admissible root is $t=1$ if $C=A+D$, and there is exactly one interior root if $C>A+D$. This completes the proof.
\end{proof}

\begin{corollary}\label{nonemptycondition}
Assume that ${\bm E}\times{\bm B}\neq0$. Then
\begin{equation}\label{emptyX}
\mathcal X_{\mathrm{COM}}(M,{\bm E},{\bm B})\neq\emptyset
\iff
C\ge
\begin{cases}
2\sqrt{AD}, & D<A,\\
A+D, & D\ge A.
\end{cases}
\end{equation}
Moreover, the cardinality of $\mathcal X_{\mathrm{COM}}(M,{\bm E},{\bm B})$ changes only on the hypersurfaces
\[
C=2\sqrt{AD},
\qquad
C=A+D.
\]
In particular, away from these hypersurfaces the generic finite cardinalities are
\[
\#\mathcal X_{\mathrm{COM}}(M,{\bm E},{\bm B})\in\{0,4,8\}.
\]
\end{corollary}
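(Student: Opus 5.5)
The corollary can be read off directly from the case table in Theorem~\ref{cardinalitytheorem}, so the plan is to reorganize that table rather than return to the polynomial $Q$. Throughout we assume ${\bm E}\times{\bm B}\neq0$, so $A>0$. Also $D>0$, since ${\bm B}\neq0$ and $\kappa>0$.

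\emph{Nonemptiness.} In the case $D<A$, Theorem~\ref{cardinalitytheorem} gives cardinality $0$ exactly when $C<2\sqrt{AD}$. In every other subcase the cardinality is $4$, $6$ or $8$. These other subcases are $C=2\sqrt{AD}$, $2\sqrt{AD}<C<A+D$, $C=A+D$ and $C>A+D$. The point to check is that they jointly exhaust $C\ge 2\sqrt{AD}$. This holds because $2\sqrt{AD}<A+D$ when $D\neq A$, by AM--GM, and here $D<A$. In the case $D\ge A$, the theorem gives cardinality $0$ exactly when $C<A+D$, and cardinality $2$ or $4$ when $C\ge A+D$. Together these two cases give~\eqref{emptyX}.

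\emph{Where the cardinality changes.} The conditions in Theorem~\ref{cardinalitytheorem} are expressed through the continuous functions $A,C,D$ of $(M,{\bm E},{\bm B})$. Fix a parameter point off both hypersurfaces $C=2\sqrt{AD}$ and $C=A+D$, and first suppose $D\neq A$. Then the strict inequalities that determine the cardinality persist in a neighbourhood, so the cardinality is locally constant there.

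The diagonal $D=A$ needs separate care, and this is the only delicate step. On it $2\sqrt{AD}=A+D$, so the two hypersurfaces meet. Take a point with $D=A$ and $C\neq A+D$. If $C<A+D=2\sqrt{AD}$, the cardinality is $0$ on both sides of the diagonal. If $C>A+D$, it is $4$ on both sides. So crossing $D=A$ away from the hypersurfaces causes no jump, and the cardinality changes only on $C=2\sqrt{AD}$ or $C=A+D$.

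\emph{Generic cardinalities.} Away from these hypersurfaces, the only values the table of Theorem~\ref{cardinalitytheorem} can produce are $0$, $4$ and $8$. The values $2$ and $6$ occur only when $C=A+D$. The double-root value $4$ at $C=2\sqrt{AD}$ also lies on a hypersurface, but $4$ occurs generically anyway. This gives $\#\mathcal X_{\mathrm{COM}}(M,{\bm E},{\bm B})\in\{0,4,8\}$ generically and completes the plan.
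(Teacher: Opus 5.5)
Your proposal is correct and follows the paper's own route: the paper gives no separate proof of this corollary and presents it as a direct reorganization of the case table in Theorem~\ref{cardinalitytheorem}, which is what you do. Your explicit check that the cardinality does not jump across the diagonal $D=A$ away from the two hypersurfaces (where $2\sqrt{AD}=A+D$) is a detail the paper leaves implicit, and you handle it correctly.
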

\begin{remark}
When $A=0$, the inequality in~\eqref{emptyX} reduces to
$
C\ge D,
$
which is precisely the condition obtained in Theorem~\ref{exceptionalcase}
for the collision manifold to be nonempty in the exceptional case
${\bm E}\times{\bm B}=0$.
\end{remark}

\section{Reconstruction of the admissible collision states}\label{reconstructionSEC}
Theorem \ref{cardinalitytheorem} determines the number of admissible collision states for any prescribed conserved quantities. The next theorem shows how to uniquely reconstruct an admissible collision state from each element of the collision manifold, thereby completing the solution of the collision problem.
\begin{theorem}\label{reconstruction}
Assume $\mathcal X_{\mathrm{COM}}(M,{\bm E},{\bm B})\neq\emptyset$. 
Given $(\bm \omega,\alpha)\in\mathcal X_{\mathrm{COM}}(M,{\bm E},{\bm B})$, define the vectors ${\bm U}, {\bm V}$ from the conserved quantities ${\bm E},{\bm B}$ by
\begin{subequations}\label{final}
\begin{equation}\label{final1}
{\bm U}=\frac2M\left({\bm E}-({\bm E}\cdot {\bm \omega}){\bm \omega}\right)+\frac{M}{2m^2}({\bm E}\cdot {\bm \omega}){\bm \omega},\quad
{\bm V}= \frac1\kappa {\bm B}\times {\bm \omega}+\alpha {\bm \omega},
\end{equation}
Then the admissible four-momenta and spin vectors in the COM frame are given by
\begin{equation}\label{final3}
p=(M/2, \kappa\, {\bm \omega}),\quad q=(M/2,-\kappa\,\bm \omega),
\end{equation}
\begin{equation}\label{final4}
{\bm s}=\frac12({\bm U}+{\bm V}),\ s^0=\frac{2}{M}{\bm p}\cdot{\bm s},\quad \quad{\bm r}=\frac12({\bm U}-{\bm V}),\ r^0=\frac{2}{M}{\bm q}\cdot{\bm r},
\end{equation}
and the admissible spin tensors are given by
\begin{equation}
\phi=\frac{1}{\sqrt{2}m}*(p\wedge s),\quad \psi=\frac{1}{\sqrt{2}m}*(q\wedge r).
\end{equation}
\end{subequations}
\end{theorem}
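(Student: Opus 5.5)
We have to show two things. First, the formulas in \eqref{final} produce an element of $\Omega^2$ whose conserved quantities are exactly $P=(M,0,0,0)$ and the bivector $K$ with electric and magnetic parts $\bm E,\bm B$. Second, every admissible state with these conserved quantities arises in this way from a unique element of $\mathcal X_{\mathrm{COM}}(M,\bm E,\bm B)$. The plan is to run the derivation of Section~\ref{formulation} backwards and check that each step was an equivalence.

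\textbf{Step 1 (momenta).} For $p,q$ as in \eqref{final3}, momentum conservation is immediate: $p+q=(M,0,0,0)$. The mass shell condition follows from $M^2/4-\kappa^2=m^2$, by \eqref{kappa}. The choice of $s^0,r^0$ in \eqref{final4} is exactly the condition $s^\mu p_\mu=r^\mu q_\mu=0$, as in \eqref{s0}; note that $\bm q\cdot\bm r=-\kappa\,\bm\omega\cdot\bm r$.

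\textbf{Step 2 (conserved bivector).} Using \eqref{EBpre}, we need to show that the $\bm U,\bm V$ of \eqref{final1} reproduce $\bm E$ and $\bm B$.
\begin{itemize}[label=]
\item For $\bm B$: we have $\kappa\,\bm\omega\times\bm V=\bm\omega\times(\bm B\times\bm\omega)=\bm B-(\bm B\cdot\bm\omega)\bm\omega=\bm B$, because $\bm B\cdot\bm\omega=0$ on $\mathcal X_{\mathrm{COM}}$. The $\alpha\bm\omega$ term drops out.
\item For $\bm E$: decompose $\bm U$ into its components parallel and perpendicular to $\bm\omega$. Then $\frac M2\bm U-\frac{2\kappa^2}{M}(\bm\omega\cdot\bm U)\bm\omega$ equals $\bm E_{\perp\omega}$ plus $\bigl(\frac M2-\frac{2\kappa^2}{M}\bigr)\frac{M}{2m^2}(\bm E\cdot\bm\omega)\bm\omega=(\bm E\cdot\bm\omega)\bm\omega$, since $\frac{M^2}{4}-\kappa^2=m^2$. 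So this is \eqref{Uprime} inverted.
\end{itemize}
Hence $K=p\wedge s+q\wedge r$ has electric part $\bm E$ and magnetic part $\bm B$.

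\textbf{Step 3 (spin normalization).} Since $s,r$ are orthogonal to $p,q$ by construction, the remaining constraints $s^\mu s_\mu=r^\mu r_\mu=\sigma^2$ are equivalent to \eqref{spinconstraint}. This equivalence was derived in Section~\ref{formulation} using only \eqref{s0} and the definition \eqref{UV}. Substituting \eqref{final1} into \eqref{spinconstraint} gives exactly \eqref{spinconstraint2}, and these hold because $(\bm\omega,\alpha)\in\mathcal X_{\mathrm{COM}}$.

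\textbf{Main obstacle.} This substitution is the one genuinely computational step. I would carry it out by writing $\bm U=\frac2M\bm E_{\perp\omega}+\frac{M}{2m^2}(\bm E\cdot\bm\omega)\bm\omega$ and $\bm V=\frac1\kappa\bm B\times\bm\omega+\alpha\bm\omega$, with $\bm E_{\perp\omega}:=\bm E-(\bm E\cdot\bm\omega)\bm\omega$. The perpendicular and longitudinal parts are orthogonal, so all dot products split cleanly:
\begin{itemize}[label=]
\item $\bm U\cdot\bm V=\frac{2}{M\kappa}\,\bm E\cdot(\bm B\times\bm\omega)+\frac{M}{2m^2}(\bm E\cdot\bm\omega)\alpha$;
\item $|\bm B\times\bm\omega|^2=|\bm B|^2$, using $\bm B\cdot\bm\omega=0$.
\end{itemize}
Then $1-\frac{4\kappa^2}{M^2}=\frac{4m^2}{M^2}$ collapses the first constraint to \eqref{first}, and a similar collapse turns the second into \eqref{second}.

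\textbf{Step 4 (spin tensors and uniqueness).} Define $\phi,\psi$ via \eqref{phifroms}. The equivalence $\Omega\cong\Gamma$ gives Frenkel's conditions, and the identical-mass remark gives $\phi+\psi=\frac{1}{\sqrt2 m}*K$, so the total spin tensor is conserved. For uniqueness, the derivation of \eqref{Uprime}, \eqref{Vprime} shows that any admissible state with conserved quantities $(M,\bm E,\bm B)$ determines $(\bm\omega,\alpha)\in\mathcal X_{\mathrm{COM}}$. Its variables are then forced to equal \eqref{final}, because $\bm U,\bm V$ determine $\bm s,\bm r$, and \eqref{s0} determines $s^0,r^0$.
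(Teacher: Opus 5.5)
Your proposal is correct and follows the same route as the paper. The paper's proof simply notes that the formulas are \eqref{Uprime}, \eqref{Vprime}, \eqref{p}, \eqref{s0}, \eqref{UV} and \eqref{phifroms}, asserts that the reconstructed variables satisfy the conservation laws and constraints, and calls uniqueness immediate. Your Steps 2 and 3 (inverting \eqref{EBpre}, then substituting $\bm U,\bm V$ into \eqref{spinconstraint} to recover \eqref{spinconstraint2} via $\bm B\cdot\bm\omega=0$ and $M^2/4-\kappa^2=m^2$) make explicit the reversibility the paper leaves implicit, and the computations check out.
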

\begin{proof}
The formulas are precisely equations~\eqref{Uprime}, \eqref{Vprime}, \eqref{p},~\eqref{s0},~\eqref{UV} and \eqref{phifroms}. Since the reconstructed variables satisfy the conservation laws together with the mass shell and spin constraints, they define an admissible collision state. The uniqueness is immediate from the reconstruction formulas.
\end{proof}
\begin{remark}
The reconstruction formulas of Theorem~\ref{reconstruction} are expressed in the center of momentum frame, where the collision problem assumes its simplest form. Let $\Lambda$ be the proper orthochronous Lorentz transformation sending the total four-momentum $P$ to $(M,0,0,0)$. If
\[
\langle p_{\mathrm{COM}},s_{\mathrm{COM}}; q_{\mathrm{COM}}, r_{\mathrm{COM}}\rangle
\]
is reconstructed by Theorem~\ref{reconstruction}, then the corresponding collision state in the original Lorentz frame is simply
\[
p=\Lambda^{-1}p_{\mathrm{COM}},\qquad
q=\Lambda^{-1}q_{\mathrm{COM}},
\]
\[
s=\Lambda^{-1}s_{\mathrm{COM}},\qquad
r=\Lambda^{-1}r_{\mathrm{COM}}.
\]
Thus, the reconstruction theorem yields a complete solution of the collision problem in arbitrary inertial frames.
\end{remark}
The reconstruction formulas of Theorem~\ref{reconstruction} are intrinsic,
since they are expressed in terms of the orthonormal basis
$\{{\bm e},{\bm f}\}$ of the plane ${\bm B}^{\perp}$.
For explicit computations, however, it is convenient to exploit the
residual rotational freedom of the COM frame.

Assume that ${\bm E}\times{\bm B}\neq0$.  
Since every spatial rotation preserves the total four-momentum
$(M,0,0,0)$, we may choose the spatial coordinates in the COM frame so
that
\[
{\bm B}=(0,0,B),
\qquad
{\bm E}=(E_{\perp},0,E_{\parallel}),
\]
where
\[
B=|{\bm B}|,
\qquad
E_{\perp}=\frac{|{\bm E}\times{\bm B}|}{|{\bm B}|},
\qquad
E_{\parallel}
=\frac{{\bm E}\cdot{\bm B}}{|{\bm B}|}.
\]
We shall refer to such a coordinate system as an
adapted COM frame.
In these coordinates,
\[
{\bm e}=(1,0,0),
\qquad
{\bm f}=(0,1,0),
\]
and therefore every admissible scattering direction takes the form
\[
{\bm\omega}
=
(x,y,0),
\]
where
\[
x=\varepsilon\sqrt t,
\qquad
y=\delta\sqrt{1-t},\qquad
\varepsilon,\delta\in\{-1,1\},
\]
and $t\in(0,1]$ is an admissible root of the quadratic equation
$Q(t)=0$.
Moreover,
\[
\alpha=
\begin{cases}
\dfrac{B}{\kappa}\dfrac{y}{x}, & 0<t<1,\\[1.2ex]
0, & t=1.
\end{cases}
\]
In the adapted COM frame, the quadratic polynomial $Q(t)$ becomes
\[
Q(t)
=
\frac{\kappa^2E_\perp^2}{m^2}\,t^2
-
Ct
+
\frac{m^2B^2}{\kappa^2},
\]
where
\[
C=M^2\sigma^2-E_\perp^2-E_\parallel^2-B^2.
\]
Moreover, the four-momenta become
\[
p=\left(\frac M2,\kappa x,\kappa y,0\right),
\qquad
q=\left(\frac M2,-\kappa x,-\kappa y,0\right),
\]
while the vectors ${\bm U}$ and ${\bm V}$ simplify to
\[
{\bm U}
=
\left(
\frac{2E_\perp}{M}
+\frac{2\kappa^2E_\perp}{Mm^2}x^2,
\frac{2\kappa^2E_\perp}{Mm^2}xy,
\frac{2E_\parallel}{M}
\right),
\qquad
{\bm V}
=
\left(
0,
\frac{B}{\kappa x},
0
\right).
\]
Consequently,
\[
{\bm s}
=
\frac12({\bm U}+{\bm V}),
\qquad
{\bm r}
=
\frac12({\bm U}-{\bm V}),
\]
that is,
\[
{\bm s}
=
\left(
\frac{E_\perp}{M}
+\frac{\kappa^2E_\perp}{Mm^2}x^2,
\frac{\kappa^2E_\perp}{Mm^2}xy
+\frac{B}{2\kappa x},
\frac{E_\parallel}{M}
\right),
\]
and
\[
{\bm r}
=
\left(
\frac{E_\perp}{M}
+\frac{\kappa^2E_\perp}{Mm^2}x^2,
\frac{\kappa^2E_\perp}{Mm^2}xy
-\frac{B}{2\kappa x},
\frac{E_\parallel}{M}
\right).
\]
Finally, the time components are recovered from
\[
s^0=\frac{2}{M}{\bm p}\cdot{\bm s},
\qquad
r^0=\frac{2}{M}{\bm q}\cdot{\bm r},
\]
and the corresponding spin tensors are given by
\[
\phi=\frac{1}{\sqrt2\,m}*(p\wedge s),
\qquad
\psi=\frac{1}{\sqrt2\,m}*(q\wedge r).
\]

\section{From collision geometry to kinetic theory}\label{kinetic}

The previous sections provide a complete characterization of the collision manifold
\[
\mathcal X_{\mathrm{COM}}(M,{\bm E},{\bm B}),
\]
determined by the conservation laws of energy-momentum and total spin. In particular, Theorems~\ref{cardinalitytheorem} and~\ref{exceptionalcase} show that, in the generic case
\[
{\bm E}\times{\bm B}\neq0,
\]
the collision manifold contains at most eight elements, whereas in the exceptional case it is either empty or infinite.

This observation has important consequences for kinetic theory. Indeed, in the classical Boltzmann theory of spinless particles, the conservation of momentum and energy determines the unit sphere of scattering directions in the center of mass frame. The collision operator is therefore obtained by integrating over this continuous family of admissible postcollisional states.
 By contrast, for relativistic particles with spin, the conservation laws determine a generically finite collision manifold whose cardinality is completely determined by the conserved quantities. Consequently, the continuous angular degree of freedom of the classical theory is replaced by a discrete collision geometry, and the angular integration appearing in the Boltzmann collision operator is naturally replaced by a finite summation over the admissible collision states.

To construct a kinetic collision operator, however, the conservation laws alone are not sufficient. One must additionally prescribe how transitions occur between admissible collision states. In the classical Boltzmann equation this information is encoded by the differential cross section, or equivalently by the collision kernel. Motivated by this analogy, we introduce the following notion.

\begin{definition}
Assume that ${\bm E}\times{\bm B}\neq0$. 
A collision kernel is a nonnegative function
\[
W:
\mathcal X_{\mathrm{COM}}(M,{\bm E},{\bm B})
\times
\mathcal X_{\mathrm{COM}}(M,{\bm E},{\bm B})
\longrightarrow
[0,\infty),
\]
where $W(x,y)$ denotes the transition rate from the collision state
$x\in\mathcal X_{\mathrm{COM}}(M,{\bm E},{\bm B})$
to the collision state
$y\in\mathcal X_{\mathrm{COM}}(M,{\bm E},{\bm B})$.
\end{definition}

The collision kernel constitutes additional physical information that is not determined by the conservation laws.  Depending on the underlying microscopic model, one may further require properties such as Lorentz covariance, microscopic reversibility or suitable normalization conditions. A deterministic collision process corresponds to the special case in which, for every collision state $x$, the transition kernel is concentrated on a single collision state.

The finiteness of the collision manifold suggests that the collision operator in the relativistic Boltzmann equation should be expressed as a finite sum over admissible collision states rather than as an integral over scattering directions. Schematically, the angular integration
\[
\int_{S^2}(\cdots)\,d\omega
\]
of the classical Boltzmann equation is replaced by the finite summation
\[
\sum_{y\in\mathcal X_{\mathrm{COM}}(M,{\bm E},{\bm B})}
W(x,y)(\cdots).
\]
This discrete collision geometry provides a natural geometric foundation for a relativistic kinetic theory of spinning particles. 
The construction and analysis of the corresponding relativistic Boltzmann equation for relativistic spinning particles, including its conservation laws, equilibrium states and entropy properties, will be the subject of a subsequent publication.

\end{document}